\documentclass[english,eqsecnum,prd,aps,nofootinbib,superscriptaddress,longbibliography,tightenlines,12pt]{revtex4-2}
\usepackage[T1]{fontenc}
\usepackage[latin9]{inputenc}
\usepackage{color}
\usepackage{babel}
\usepackage{mathtools}
\usepackage{amsmath}
\usepackage{amssymb,accents}
\usepackage{mathrsfs}
\usepackage{cancel}
\usepackage{comment}
\usepackage{xcolor}
\usepackage{tensor}
\usepackage{tikz-cd}
\usepackage{tikz}
\usetikzlibrary{shapes,arrows}
\usepackage{booktabs}

\usepackage[unicode=true,pdfusetitle,
bookmarks=true,bookmarksnumbered=false,bookmarksopen=false,
 breaklinks=true,allcolors=blue,backref=false,colorlinks=true]
 {hyperref}
 
\newenvironment{proof}
{\par\noindent\textit{Proof.} 
} 
{\hfill $\square$\par}

\definecolor{light-gray}{gray}{0.92}

\begin{document}

\newtheorem{theorem}{Theorem}
\newtheorem{lemma}{Lemma}
\newtheorem{proposition}{Proposition}

\title{Canonical Clock Sectors and Relational Frame Equivalence in Brans--Dicke Theory}

\author{Yaser Tavakoli}
\email{yaser.tavakoli@ug.edu.pl}
\affiliation{Institute of Theoretical Physics and Astrophysics, Faculty of Mathematics, Physics and Informatics, University of Gdansk, Wita Stwosza 57, 80-308, Gda\'nsk, Poland}

\author{Jerzy Lewandowski}
\email{jerzy.lewandowski@fuw.edu.pl}
\affiliation{Faculty of Physics, University of Warsaw, Pasteura 5, 02-093 Warsaw, Poland}

\date{\today}
\begin{abstract} 

We investigate the equivalence of the Jordan and Einstein frames of the Brans--Dicke theory before and after relational deparametrization of the constrained Hamiltonian system. Although the two conformal formulations are equivalent at the covariant level and related by a canonical transformation on the standard ADM phase space, their equivalence after reduction with respect to an internal clock is nontrivial. Using the Brans--Dicke scalar as a relational clock, we show that the apparent discrepancy between the reduced Hamiltonians does not indicate a physical inequivalence of the two frames. Instead, it arises from a mismatch in the canonical embedding of the clock sector prior to reduction. While the scalar configuration variable is preserved under the conformal transformation, its conjugate momentum is shifted by a contribution involving the gravitational momentum trace. Consequently, relational dynamics are determined not by the clock variable $T$ alone, but by the complete canonical clock pair $(T,P_T)$. We construct a frame-adapted canonical chart in which the clock sector is consistently transformed prior to deparameterization. The resulting reduced Hamiltonian coincides with that of the Einstein frame, restoring the equivalence of the reduced relational dynamics. Our results identify the canonical clock sector as the fundamental structure governing relational evolution in Brans--Dicke theory and provide a general framework for understanding frame dependence in scalar--tensor gravity and reduced phase-space quantization.

\end{abstract}
\maketitle

\section{Introduction}

Scalar--tensor theories constitute one of the most extensively studied extensions of Einstein's general relativity, in which gravity is mediated not only by the spacetime metric but also by one or more scalar fields. Among these theories, Brans--Dicke gravity occupies a distinguished position as the simplest scalar--tensor model in which the gravitational coupling is promoted to a dynamical field \cite{Brans:1961sx,Dicke:1961gz,Fujii:2003pa}. Originally proposed as a realization of Mach's principle, Brans--Dicke theory has since played an important role in the study of modified gravity, inflationary cosmology, dark energy, and effective theories beyond general relativity \cite{Fujii:2003pa,Capozziello:2010sc}.

A characteristic feature of scalar--tensor gravity is the existence of two mathematically related formulations connected by a conformal transformation. In the \emph{Jordan frame}, the scalar field is non-minimally coupled to the Ricci scalar, whereas in the \emph{Einstein frame} the gravitational sector takes the Einstein--Hilbert form after a conformal rescaling of the metric together with a suitable scalar redefinition \cite{Dicke:1961gz,Wagoner:1970vr}. When the conformal transformation is regular and invertible, both formulations describe the same classical solution space and are therefore commonly regarded as classically equivalent descriptions of the same theory \cite{Faraoni:2006fx,Flanagan:2004bz}. Nevertheless, the physical meaning of this equivalence remains subtle, since the identification of observables, physical units, and dynamical variables may depend on the chosen frame \cite{Faraoni:1999hp,Flanagan:2004bz,Deruelle:2010ht,Domenech:2015hka}. In particular, equivalence at the covariant level does not necessarily imply equivalence of the Hamiltonian formulation or its quantization \cite{Kamenshchik:2014waa,Falls:2018olk,Pandey:2016unk,Sharma:2023abm,Artymowski:2013qua,Artymowski:2018pyg}.

The Hamiltonian formulation provides a natural framework for investigating these issues, since frame equivalence requires preservation of the underlying symplectic structure and therefore depends on whether the conformal transformation defines a canonical transformation in phase space \cite{Dirac:1964,Henneaux:1992ig,Thiemann:2007zz}. Different Hamiltonian analyses have nevertheless led to apparently conflicting conclusions. Garay {\em et al.} \cite{Garay:1992ej}, followed by Deruelle {\em et al.} \cite{Deruelle:2009pu}, showed that suitable Hamiltonian formulations of scalar--tensor and $f(R)$ theories are related by canonical transformations. On the other hand, Gionti {\it et al.} \cite{Jarv:2014hma,GabrieleGionti:2020drq,Galaverni:2021xhd} considered the extended Dirac phase space, where lapse and shift are promoted to canonical variables, and found that the Weyl transformation fails to preserve the extended symplectic structure. These results are not contradictory, but rather correspond to different choices of canonical phase space. One purpose of this work is to clarify this distinction and analyze how the notion of frame equivalence depends on the Hamiltonian structure under consideration.

A further issue arises when relational dynamics are introduced. In generally covariant theories, the Hamiltonian generates gauge transformations rather than evolution with respect to an external time parameter. Physical evolution emerges only after relational deparametrization, where one dynamical variable is selected as an internal clock and the Hamiltonian constraint is solved for its conjugate momentum \cite{Rovelli:1990ph,Rovelli:2001bz,Dittrich:2006ee,Kuchar:1991qf,Isham:1992ms}. The resulting reduced Hamiltonian generates evolution with respect to this relational time and becomes the central object in reduced phase-space quantization. Since different choices of clocks may lead to inequivalent quantum theories, this setting is directly related to the multiple-choice problem of time in quantum gravity \cite{Kuchar:1991qf,Isham:1992ms,Anderson:2012vk,Sharma:2023abm,Zhang:2011vg,Zhang:2011gn}.

This motivates a deeper question for scalar--tensor theories: does relational frame equivalence depend only on the choice of the clock variable, or on the complete canonical realization of the clock sector? Even if two conformally related theories are canonically equivalent before reduction, relational deparametrization involves gauge fixing and solving the Hamiltonian constraint, operations that need not commute with canonical transformations. Therefore, canonical equivalence of the unreduced constrained systems does not automatically guarantee equivalence of the corresponding reduced dynamics.

In this work, we investigate this question in Brans--Dicke theory by using the scalar field as a relational clock in both conformal frames. We analyze the Jordan--Einstein correspondence at four levels: the covariant action, the standard ADM phase space, the extended Dirac phase space, and the reduced relational phase space obtained after deparametrization.

Our main results are the following. First, we show that the apparent disagreement in previous Hamiltonian analyses originates from the use of different phase spaces: the Jordan--Einstein transformation is canonical on the standard ADM phase space but generally non-canonical on the extended Dirac phase space. Second, we demonstrate that canonical equivalence of the unreduced systems does not necessarily survive relational deparametrization. Although the Brans--Dicke scalar is the same clock variable in both frames, its conjugate momentum transforms nontrivially by mixing with the trace of the gravitational momentum. As a result, the reduced Hamiltonians obtained using the naive clock momentum differ, showing that relational deparametrization and conformal canonical transformations do not generally commute.

Finally, we identify the origin of this apparent frame dependence and show that relational dynamics are determined by the canonical clock sector $(T,P_T)$ rather than by the clock configuration variable alone. We construct a frame-adapted canonical chart in which the clock momentum is aligned with its conformally transformed counterpart before reduction. Deparametrization with respect to this adapted clock sector yields identical reduced Hamiltonians in the Jordan and Einstein frames. Thus, the apparent frame dependence arises from a mismatch in the canonical embedding of the relational clock rather than from a failure of conformal equivalence.

The results establish a hierarchy of notions of frame equivalence: equivalence of the covariant field equations, canonical equivalence of the unreduced constrained systems, and equivalence of the reduced relational dynamics correspond to distinct levels of the Hamiltonian construction. The analysis also suggests that the canonical structure of the clock sector is an essential ingredient in relational formulations of gravity.

The remainder of this paper is organized as follows. Section~\ref{sec:action-frames} reviews Brans--Dicke theory in the Jordan and Einstein frames and derives the conformal transformation at the action level. Section~\ref{sec:HamiltonianJE} develops the Hamiltonian formulation in both frames and analyzes the transformation of canonical variables. Section~\ref{sec:CanonicalS} examines the canonical properties of the transformation on the standard  and extended ADM phase spaces. Section~\ref{sec:relationalDyn} constructs the reduced relational dynamics using the Brans--Dicke scalar as an internal clock. Section~\ref{sec:frame-adapted} introduces a canonical clock sector and demonstrates the equivalence of the reduced dynamics. Finally, Section.~\ref{sec:conclusion} summarizes the results and discusses their implications for relational canonical gravity and quantum cosmology.

\section{Brans--Dicke Theory in Jordan and Einstein Frames}
\label{sec:action-frames}

We consider a general scalar--tensor theory of gravity in $n$ spacetime dimensions. In the Jordan frame, the action is expressed as \cite{Faraoni:2006fx,Dyer:2008hb,Fujii:2003pa}:
\begin{equation}
S_{\rm J} = \frac{1}{2\kappa^2}\int_{M} d^n x \sqrt{-g} \left[
 f(\phi) R
-  \frac{\omega(\phi)}{f(\phi)}\, g^{\mu\nu}\, D_\mu \phi D_\nu \phi
- U(\phi)
\right] + S_{\mathrm{GHY}}^{\rm (J)},
\label{eq:Jordan-action}
\end{equation}
where $g\equiv \det(g_{\mu\nu})$, $\kappa^2=8\pi$, $f(\phi)$ is the dimensionless non-minimal coupling to the Ricci scalar $R$, $\omega(\phi)$ determines the scalar-field kinetic structure, and $U(\phi)$ is the scalar potential. The dynamical coupling $f(\phi) R$ implies that the effective gravitational constant $G_{\rm eff} \sim 1/f(\phi)$ varies dynamically. The Gibbons--Hawking--York (GHY) boundary term $S_{\mathrm{GHY}}^{\rm (J)}$ ensures a well-defined variational principle in the presence of boundaries. Varying Eq.~\eqref{eq:Jordan-action} with respect to \(g_{\mu\nu}\) and \(\phi\) (vanishing on $\partial M$) yields the gravitational and scalar field equations.

\subsection{Einstein frame representation}

To transition to the Einstein frame, we apply a conformal transformation to the metric:
\begin{equation}
\tilde g_{\mu\nu} = \underline{\Omega}^2(\phi) g_{\mu\nu}, \qquad
\underline{\Omega}^2(\phi) = [f(\phi)]^{\frac{2}{n-2}},
\label{eq:conformal}
\end{equation}
where $\tilde{g}_{\mu\nu}$ represents the Einstein-frame metric. Throughout this work we assume $f(\phi)>0$,
so that the conformal transformation is invertible and the Jordan and Einstein descriptions are related by a one-to-one field redefinition. Utilizing standard identities for the transformation of the Ricci scalar and the volume element, we obtain the Einstein-frame action:
\begin{equation}
S_{\rm E} = \frac{1}{2\kappa^2} \int_{M} d^n x \sqrt{-\tilde g} \left[
\tilde R
-  \mathcal{K}(\phi) \, \tilde g^{\mu\nu} D_\mu \phi \, D_\nu \phi
- V(\phi)
\right] + S_{\mathrm{GHY}}^{\rm (E)},
\label{eq:Einstein-action}
\end{equation}
where $\tilde{g} \equiv \det(\tilde{g}_{\mu\nu})$, and $\tilde{R}$ denotes the Ricci scalar evaluated with respect to $\tilde{g}_{\mu\nu}$. The kinetic prefactor $\mathcal{K}(\phi)$ and the Einstein-frame potential $V(\phi)$ are defined by:
\begin{align}
\mathcal{K}(\phi) &= \frac{\omega(\phi)}{f(\phi)} + \frac{n-1}{n-2} \left( \frac{f'(\phi)}{f(\phi)} \right)^2, \label{eq:Kphi}\\
V(\phi) &= \frac{U(\phi)}{f(\phi)^{\frac{n}{n-2}}}. \label{eq:Uphi}
\end{align}
While the curvature is minimally coupled, the scalar field kinetic term remains non-canonical. Although a field redefinition 
\begin{equation}
    \varphi(\phi) = \frac{1}{\kappa} \int \sqrt{\mathcal{K}(\phi)} d\phi
\end{equation}
would canonically normalize the action, we retain the original field $\phi$ to establish a unique choice of relational clock across both frames. Hence, by ``Einstein frame,'' we refer specifically to the representation where the gravitational sector maps to the standard general relativistic Einstein--Hilbert form, while the kinetic sector remains non-canonical.

To ensure a well-defined variational principle, both the Jordan and Einstein-frame actions are supplemented by appropriate GHY boundary terms \cite{Gibbons:1976ue,York:1972sj,York:1986lje}. Under  conformal transformation (\ref{eq:conformal}), the Jordan-frame boundary term transforms into the standard Einstein-frame GHY term together with additional scalar-field contributions originating from derivatives of the conformal factor. These extra terms are required for the equivalence of the variational principles in the two frames and do not affect the canonical bulk analysis developed in the following. Accordingly, throughout the remainder of this work we assume either compact spatial slices without boundary or the standard asymptotic fall-off conditions under which the boundary contributions are well defined.

\subsection{Brans--Dicke theory as a special case}

For Brans--Dicke theory, 
\begin{equation}
n=4, \quad f(\phi) = \phi, \quad    \omega(\phi) = \omega = \text{const.}
\end{equation}
The action (\ref{eq:Jordan-action}) simplifies to \cite{Dyer:2008hb}:
\begin{align}
S_{\rm J} &= \frac{1}{2\kappa^2} \int_{M} d^4x \, \sqrt{-g}  \left( \phi R - \frac{\omega}{\phi} g^{\mu\nu} D_\mu \phi\,  D_\nu \phi - U(\phi) \right) \nonumber \\
& \quad \quad + \frac{1}{\kappa^2} \int_{\partial M} d^3x \, \sqrt{h} \, \phi K + S_{\rm matt}[g_{\mu\nu}, \varphi]. 
\label{BD-Action}
\end{align}
Performing the Weyl rescaling $\tilde{g}_{\mu\nu} = \phi\, g_{\mu\nu}$, the action (\ref{BD-Action}) transitions into the Einstein-frame form:
\begin{align}
S_{\rm E} &= \frac{1}{2\kappa^2} \int_{M} d^4x \, \sqrt{-\tilde g} \left( \tilde R - \frac{\bar \omega}{\phi^2}\, \tilde g^{\mu\nu} D_\mu \phi\, D_\nu \phi - V(\phi) \right)  \nonumber \\ &\qquad + \frac{1}{\kappa^2} \int_{\partial M} d^3x \, \sqrt{\tilde h} \, \tilde K + \tilde{S}_{\rm matt}[\tilde{g}_{\mu\nu}, \tilde{\varphi}], 
\label{BD-Action-Einstein}
\end{align}
where 
\begin{equation}
\bar \omega = \omega + \frac{3}{2}, \qquad V(\phi) = \frac{U(\phi)}{\phi^2}. 
\label{eq:tildewV}
\end{equation}

\section{Hamiltonian Formulation of the Brans-Dicke model} \label{sec:HamiltonianJE}

To formulate the theory in canonical (Hamiltonian) form, we employ the Arnowitt--Deser--Misner (ADM) decomposition \cite{Arnowitt:1960es,Arnowitt:1962hi}, foliating spacetime $M$ into spacelike hypersurfaces $\Sigma_t$ such that $M=\mathbb{R}\times\Sigma$. The metric reads:
\begin{align}
    g_{\mu\nu}dx^\mu dx^\nu = -N^2 dt^2 + h_{ab}(dx^a + N^a dt)(dx^b + N^b dt),
    \label{metric-BD1}
\end{align}
where $N$ is the lapse function, $N^a$ is the shift vector, and $h_{ab}$ is the induced spatial metric.

\subsection{Jordan-frame Hamiltonian formulation}
\label{sec:Hamiltonian-Anal-J}

Using the ADM decomposition, the Jordan-frame Brans--Dicke Lagrangian density is written as \cite{Olmo:2011fh}: 
\begin{align}
\mathcal{L}_{\rm J} &= \frac{\sqrt{h}}{2\kappa^2}\Bigg[
N\phi\left({}^{(3)}R + K_{ab}K^{ab} - K^2\right)
+  \frac{\omega}{N\phi}(\dot{\phi} - N^a D_a\phi)^2  - N\frac{\omega}{\phi} h^{ab}D_a\phi D_b\phi \nonumber\\
&\qquad\qquad
- 2K(\dot{\phi} - N^a D_a\phi)
+ 2h^{ab}D_a N D_b\phi
- N U(\phi) \Bigg],
\label{Lagrangian-dens-BD}
\end{align} 
where $K_{ab}$ is the extrinsic curvature:
\begin{equation}
K_{ab} = \frac{1}{2N}\left(\dot{h}_{ab} - D_a N_b - D_b N_a\right).  \label{Extrinsic-Jordan}
\end{equation}
The conjugate momenta for the dynamic fields $h_{ab}$ and $\phi$ are:
\begin{align}
p^{ab} &= \frac{\sqrt{h}}{2\kappa^2}\left[\phi(K^{ab}-Kh^{ab}) - \frac{h^{ab}}{N}(\dot{\phi}-N^c D_c\phi)\right], \label{eq:Amomenta1-JF} \\
p_\phi &= \frac{\sqrt{h}}{\kappa^2}\left[\frac{\omega}{N\phi}(\dot{\phi}-N^cD_c\phi)-K\right]. \label{eq:momenta2-JF}
\end{align}
 The conjugate momenta for $N$ and $N_a$ gives the {\em primary constraints}:  
\begin{equation}
\pi_N \approx 0, \qquad \pi_a \approx 0.
\label{Primary-Constraint-JF}
\end{equation}

The total Hamiltonian is 
\begin{equation}
    H_{\rm J} = \int d^3x \left( \lambda_N \pi_N + \lambda^a \pi_a + N\mathcal{H}_N + N^a \mathcal{H}_a \right) \label{eq:HamiltonianTot}
\end{equation} 
with the {\em secondary constraints} given by:
\begin{align}
\mathcal{H}_N &= \frac{2\kappa^2}{\sqrt{h}\phi}\left[p^{ab}p_{ab}-\frac{p_h^2}{2} + \frac{\phi^2}{4\bar{\omega}}\left(p_\phi-\frac{p_h}{\phi}\right)^2\right]
 \nonumber\\
&\quad - \frac{\sqrt{h}}{2\kappa^2} \Big[\phi\, {}^{(3)}R - 2D^2\phi - \frac{\omega}{\phi} D_a\phi D^a\phi
 - U(\phi)
\Big], \label{Hamiltonian-den-Jordan}\\[5pt]
\mathcal{H}_a &=  - 2 D_b\, p^b{}_a + p_\phi D_a\phi.
  \label{Hamiltonian-den-Jordan2}
\end{align} 
Here $D^2\equiv D^aD_a$ and $p_h$ is the trace of the $p^{ab}$. It is obtained by contracting the metric momentum as
\begin{align}
    p_h\, &\equiv\, h_{ab}p^{ab}\nonumber \\
    & =\,  -\frac{\sqrt{h}}{2\kappa^2}\left[2\phi K  + \frac{3}{N}\left(\dot{\phi} - N^cD_c\phi\right)\right]. 
    \label{eq:p-hformula}
\end{align}
Combining this identity with Eq.~(\ref{eq:momenta2-JF}), we find a relation between $p_\phi$ and $p_h$ as
\begin{equation}
p_\phi  -\frac{p_h}{\phi} = \frac{\bar{\omega}}{\kappa^2}
\frac{\sqrt{h}}{N\phi}\left(\dot{\phi} - N^cD_c\phi\right). 
\label{ph-pphi}
\end{equation}
This relation highlights the non-diagonal kinetic structure of the theory: $p_h$ and $p_\phi$ are mixed by a term proportional to $(\dot{\phi} - N^cD_c\phi)$, reflecting the presence of the $-2K(\dot{\phi} - N^a D_a\phi)$ interaction in the underlying Lagrangian.

Secondary constraints are generated by conserving the primary constraints through the  fundamental  Poisson brackets 
\begin{subequations}
  \begin{align}
    \dot \pi_N &= \{\pi_N,H_{\rm J}\}=-\mathcal{H}_N \approx 0,  \\
    \dot \pi_a &= \{\pi_a,H_{\rm J}\}=-\mathcal{H}_a \approx 0. 
\end{align}
\end{subequations}
The total Hamiltonian (\ref{eq:HamiltonianTot}) is a linear combination of constraints and therefore vanishes weakly on the constraint surface, $H_{\rm J}\approx 0$.

The  non-vanishing Poisson brackets are
\begin{subequations}
\begin{align}
    \{h_{ab}(x), p^{cd}(x^\prime)\} &= \delta^{(c}_a \delta^{d)}_b\, \delta(x,x^\prime), \\
    \{\phi(x), p_\phi(x^\prime)\} &= \delta(x,x^\prime).
\end{align}
\end{subequations}
Moreover, the constraint algebra closes under the standard hypersurface-deformation algebra, reflecting the spacetime diffeomorphism invariance of the theory. This closure is an important consistency check of the Hamiltonian formulation and shows that the Brans--Dicke scalar does not introduce anomalous constraints in the generic case $\bar{\omega}\neq 0$.

\subsection{Einstein-frame Hamiltonian formulation}
\label{sec:Hamiltonian-Anal-E}

We now consider the Hamiltonian formulation of Brans--Dicke theory in the Einstein frame. The ADM decomposition of the Einstein-frame metric is given by
\begin{align}
\tilde{g}_{\mu\nu}dx^\mu dx^\nu = -\tilde{N}^2 dt^2 + \tilde{h}_{ab} (dx^a+\tilde{N}^a dt) (dx^b+\tilde{N}^b dt),
\label{metric-EinsteinF1}
\end{align}
where $\tilde N$, $\tilde N^a$, and $\tilde h_{ab}$ denote the lapse function, shift vector, and induced spatial metric in the Einstein frame, respectively.

The Einstein-frame and Jordan-frame ADM variables are related through the conformal transformation
\begin{equation}
\tilde N = \sqrt{\phi} N,
\qquad
\tilde N_a = \phi N_a,
\qquad
\tilde h_{ab} = \phi h_{ab},
\label{metric-EinsteinF2}
\end{equation}
from which it follows that
\begin{equation}
\tilde N^a = N^a,
\qquad
\tilde h^{ab}=\phi^{-1}h^{ab},
\qquad
\tilde h=\phi^{3} h.
\label{metric-EinsteinF2-inv}
\end{equation}
Using these relations, the Einstein-frame extrinsic curvature can be expressed in terms of Jordan-frame variables as
\begin{equation}
\tilde K_{ab} = \sqrt{\phi} K_{ab} + \frac{h_{ab}}{2N\sqrt{\phi}} \left( \dot\phi-N^cD_c\phi \right).
\label{eq:extrinsic-transformation}
\end{equation}
Its trace therefore transforms according to
\begin{equation}
\tilde K = \frac{1}{\sqrt{\phi}} \left[ K + \frac{3}{2N\phi} \left( \dot\phi-N^cD_c\phi \right) \right].
\label{eq:CT-extCurv2}
\end{equation}

The Einstein-frame ADM Lagrangian density then takes the form
\begin{align}
{\mathcal L}_{\rm E} &= \frac{\tilde N\sqrt{\tilde h}}{2\kappa^2} \Bigg[\tilde K_{ab}\tilde K^{ab} - \tilde K^2 + \frac{\bar\omega}{\tilde N^2\phi^2} \left( \dot\phi-\tilde N^aD_a\phi \right)^2 \nonumber\\
&\hspace{2.5cm} + {}^{(3)}\tilde R - \frac{\bar\omega}{\phi^2} \tilde h^{ab} D_a\phi D_b\phi - V(\phi) \Bigg].
\label{Lagrangian-Einstein-gen}
\end{align}
The three-dimensional Ricci scalar transforms according to
\begin{equation}
{}^{(3)}\tilde R = \frac1{\phi} \left[ {}^{(3)}R - \frac{2}{\phi}D^2\phi + \frac{3}{2\phi^2} D_a\phi\, D^a\phi\right].
\label{eq:curvature-CT0}
\end{equation}
For a homogeneous Brans--Dicke scalar ($D_a\phi=0$),  the relation simplifies considerably:
\begin{equation}
{}^{(3)}\tilde R = \frac{{}^{(3)}R}{\phi}.
\end{equation}

The canonical momenta conjugate to $(\tilde h_{ab},\phi)$ are obtained from the Lagrangian  (\ref{Lagrangian-Einstein-gen}) as
\begin{align}
\tilde p^{ab} &= \frac{\sqrt{\tilde h}}{2\kappa^2} \left( \tilde K^{ab} - \tilde h^{ab}\tilde K \right), \label{momenta-Einsten-gen0} \\
\tilde p_\phi &= \frac{\sqrt{\tilde h}}{\kappa^2} \frac{\bar\omega}{\tilde N\phi^2} \left( \dot\phi-\tilde N^aD_a\phi \right). \label{momenta-Einsten-gen}
\end{align}
Expressing the Einstein-frame extrinsic curvature in terms of the Jordan-frame variables and using Eq.~(\ref{ph-pphi}) to eliminate the velocity of the scalar field, one obtains the transformation laws
\begin{align}
\tilde p^{ab}\, &=\, \frac{p^{ab}}{\phi},
\label{eq:ptilde-ab1b} \\
\tilde p_\phi\, &=\,  p_\phi - \frac{p_h}{\phi}. \label{eq:momentaTrans}
\end{align}
Eqs.~(\ref{eq:ptilde-ab1b}) and (\ref{eq:momentaTrans}) already reveal the most important structural aspect of the conformal transformation at the Hamiltonian level. In contrast to the metric variables, whose transformation is purely multiplicative, the canonical momenta exhibit a nontrivial mixing between the scalar and gravitational sectors. In particular, the Einstein-frame scalar momentum acquires a contribution proportional to the trace momentum of the spatial geometry.

This feature reflects the presence of the mixed kinetic term
\begin{equation}
-2K \big( \dot\phi - N^aD_a\phi \big)
\label{eq:mixedkinterm}
\end{equation}
in the Jordan-frame ADM Lagrangian. The conformal transformation effectively diagonalizes the kinetic mixing between the scalar and gravitational sectors, redistributing the dynamical content between the geometric and scalar degrees of freedom. From the Hamiltonian perspective, this momentum mixing underlies the nontrivial canonical structure connecting the Jordan and Einstein frames. We return to this issue in Section.~\ref{sec:CanonicalS}, where the canonical properties of the conformal transformation are analyzed in detail.

The Einstein-frame Lagrangian can now be rewritten entirely in terms of the canonical pairs $( \tilde h_{ab},\tilde p^{ab} )$ and $( \phi,\tilde p_\phi )$ as
\begin{align}
{\mathcal L}_{\rm E} &=  \frac{2\kappa^2 \tilde N}{\sqrt{\tilde h}} \Bigg[\tilde p^{ab}\tilde p_{ab} - \frac{\tilde p_{\tilde h}^2}{2}  + \frac{\phi^2}{4\bar\omega} \tilde p_\phi^2 \Bigg] \nonumber\\
&\hspace{.5cm} +\frac{\tilde N\sqrt{\tilde h}}{2\kappa^2}\left[{}^{(3)}\tilde R - \frac{\bar\omega}{\phi^2} \tilde h^{ab} D_a\phi D_b\phi - V(\phi)\right],
\label{Lagrangian-Einstein-gen-Re}
\end{align}
where the trace momentum $\tilde p_{\tilde h} := \tilde h_{ab}\tilde p^{ab}$ is related to the Jordan-frame variables by
\begin{equation}
\tilde p_{\tilde h} = p_h.
\end{equation}
The momenta conjugate to the lapse and shift vanish identically,
\begin{equation}
\tilde\pi_{\tilde N}\approx0,
\qquad
\tilde\pi_a\approx0,
\label{Primary-Constraint-EF}
\end{equation}
and therefore constitute the primary constraints of the theory.

Introducing the corresponding Lagrange multipliers $\tilde\lambda_{\tilde N}(x)$ and $\tilde\lambda^a(x)$, the total Hamiltonian becomes
\begin{equation}
H_{\rm E} = \int d^3x \left( \tilde\lambda_{\tilde N}\tilde\pi_{\tilde N} + \tilde\lambda^a\tilde\pi_a + \tilde N\tilde{\mathcal H}_{\tilde N} + \tilde N^a\tilde{\mathcal H}_a \right).
\end{equation}
The Hamiltonian and diffeomorphism constraint densities are given by
\begin{align}
\tilde{\mathcal H}_{\tilde N} &= \frac{2\kappa^2}{\sqrt{\tilde h}} \left[ \tilde p^{ab}\tilde p_{ab} - \frac12\tilde p_{\tilde h}^2 + \frac{\phi^2}{4\bar\omega} \tilde p_\phi^2 \right] \nonumber\\
&\qquad - \frac{\sqrt{\tilde h}}{2\kappa^2} \left[ {}^{(3)}\tilde R - \frac{\bar\omega}{\phi^2} \tilde h^{ab} D_a\phi D_b\phi - V(\phi) \right], \label{Hamiltonian-den-Einstein} \\
\tilde{\mathcal H}_a &= \tilde p_\phi D_a\phi - 2\tilde h_{ab} \tilde D_c\tilde p^{bc}. \label{Hamiltonian-den-Einstein2}
\end{align}
The diffeomorphism constraint has exactly the same structural form as its Jordan-frame counterpart, reflecting the underlying spatial covariance of the theory and the geometric nature of the conformal transformation.

Preservation of the primary constraints under Hamiltonian evolution leads to the secondary constraints
\begin{align}
\dot{\tilde\pi}_{\tilde N} &= \{ \tilde\pi_{\tilde N}, H_{\rm E} \} = -\tilde{\mathcal H}_{\tilde N} \approx0, \nonumber\\
\dot{\tilde\pi}_a &= \{ \tilde\pi_a, H_{\rm E} \} = -\tilde{\mathcal H}_a \approx0. \label{eq:secondConstraint-E}
\end{align}
Consequently, the total Hamiltonian vanishes weakly on the constraint surface, as expected for a generally covariant theory.

\section{Canonical Structure of the Conformal Transformation}
\label{sec:CanonicalS}

The question of whether the Jordan and Einstein frames are canonically equivalent has been a subject of long-standing debate in scalar--tensor gravity. At the level of the classical action, the two formulations are related by an invertible conformal transformation and therefore describe the same set of spacetime geometries. In the Hamiltonian formulation, however, the notion of equivalence becomes considerably more subtle, since it depends not only on the configuration variables but also on the underlying symplectic structure and, crucially, on the choice of phase space. It is therefore useful to distinguish between three different notions of equivalence:
\begin{itemize}
\item[i)] Equivalence of the classical actions,
\item[ii)] Equivalence of the constrained Hamiltonian systems,
\item[iii)] Equivalence of the reduced relational dynamics obtained after deparametrization.
\end{itemize}
As we shall see, the Jordan and Einstein frames are canonically equivalent at the level of the standard ADM phase space, generally fail to remain canonical on the extended Dirac phase space, and may lead to distinct reduced Hamiltonian systems  after relational gauge fixing (see Section.~\ref{sec:relationalDyn}). The apparent contradictions found in the literature originate precisely from these different levels of description rather than from any mathematical inconsistency.

\subsection{Canonical equivalence on the standard ADM phase space}
\label{subsec:ADMcanonical}

We begin by considering the standard ADM phase space of Brans--Dicke theory,
\begin{equation}
\Gamma_{\rm J} = \left\{ h_{ab},p^{ab}; \phi,p_\phi \right\},
\label{eq:GammaJ}
\end{equation}
where the lapse function $N$ and shift vector $N^a$ act as Lagrange multipliers enforcing the Hamiltonian and diffeomorphism constraints and therefore do not belong to the true gravitational phase space.

The canonical character of the Jordan--Einstein transformation can be established by comparing the symplectic one-forms in the two frames. The Jordan-frame symplectic potential is
\begin{equation}
\Theta_{\rm J} = \int d^3x \left( p^{ab}\delta h_{ab} + p_\phi\delta\phi \right).
\label{eq:symplec1form-J}
\end{equation}
Using the conformal relations (\ref{metric-EinsteinF2}), (\ref{eq:ptilde-ab1b}) and (\ref{eq:momentaTrans}),
the Einstein-frame symplectic potential becomes
\begin{align}
\Theta_{\rm E} &= \int d^3x \left( \tilde p^{ab}\delta\tilde h_{ab} + \tilde p_\phi\delta\phi \right) \nonumber\\
&= \int d^3x \left[\frac{p^{ab}}{\phi}\,  \delta(\phi h_{ab}) + \Big( p_\phi-\frac{p_h}{\phi} \Big)\delta\phi \right] \nonumber\\
&= \int d^3x \left( p^{ab}\delta h_{ab} + p_\phi\delta\phi \right) = \Theta_{\rm J},
\end{align}
where all additional contributions generated by the momentum mixing cancel identically, leaving $\Theta_{\rm E} =  \Theta_{\rm J}$.
Consequently, the symplectic two-form
$\Omega_2 = \delta\Theta$
is preserved,
\begin{equation}
\Omega_{2,\rm E} = \Omega_{2,\rm J},
\end{equation}
which demonstrates that the Jordan--Einstein transformation preserves the canonical structure of the standard ADM phase space $\Gamma_{\rm J}$.

Equivalently, the transformed variables satisfy the canonical Poisson algebra 
\begin{subequations}
\begin{align}
\{ \tilde h_{ab}(x), \tilde p^{cd}(y) \} &= \delta_{(a}^{c}\delta_{b)}^{d} \delta(x,y), \\
\{ \phi(x), \tilde p_\phi(y) \} &= \delta(x,y),
\end{align}    
\end{subequations}
with all remaining brackets vanishing identically.
This reproduces the conclusions originally obtained by Garay {\em et al.}~\cite{Garay:1992ej} and later by Deruelle {\it et al.}~\cite{Deruelle:2009pu}, who showed that the Jordan and Einstein-frame Hamiltonian formulations are related by canonical transformation when restricted to the standard ADM phase space. In this setting, the two formulations possess identical constraint algebras and generate equivalent classical dynamics.

The underlying reason for this equivalence is that the conformal transformation acts only on the true dynamical degrees of freedom, namely the spatial metric and the Brans--Dicke scalar field, while the lapse and shift remain nondynamical Lagrange multipliers. The transformation therefore reshuffles the canonical variables {\em without} altering the underlying symplectic structure.

It is important to emphasize, however, that canonicality is not a property of the conformal transformation alone, but rather of the pair consisting of the transformation and the phase space on which it acts. 
As we shall see in the following subsection, this conclusion changes once the phase space is enlarged by promoting the lapse and shift to independent dynamical variables.

\subsection{Extended phase space and the origin of non-canonicality}

The conclusion of the previous subsection changes once the lapse and shift are promoted to canonical variables. The corresponding extended ADM phase space is
\begin{equation}
\Gamma_{\rm J}^{\rm (ext)} = \left\{ N,p_N; N^a,p_a; h_{ab},p^{ab}; \phi,p_\phi \right\},
\label{extended-phase-space}
\end{equation}
where the lapse and shift are treated on the same footing as the remaining canonical coordinates.

The crucial observation is that the standard Jordan--Einstein conformal transformation rescales the lapse,
\begin{equation}
\tilde N=\sqrt{\phi}\,N,
\nonumber
\end{equation}
while the scalar momentum simultaneously transforms according to Eq.~(\ref{eq:momentaTrans}). As a result, once $N$ becomes a canonical variable, the conformal transformation acts not only on the physical degrees of freedom but also on the foliation sector.

Indeed, 
\begin{equation}
\delta\tilde N = \sqrt{\phi}\, \delta N + \frac{N}{2\sqrt{\phi}}\, \delta\phi,
\label{variation-lapse}
\end{equation}
so that the lapse contribution to the Einstein-frame symplectic potential becomes
\begin{equation}
\tilde p_N\delta\tilde N = \sqrt{\phi}\, \tilde p_N\delta N + \frac{N\tilde p_N}{2\sqrt{\phi}}\delta\phi.
\label{lapse-symplectic}
\end{equation}
The second term contributes directly to the scalar sector of the symplectic potential. Combining it with the transformed scalar momentum gives
\begin{align}
\tilde p_N\delta\tilde N + \tilde p_\phi\delta\phi\, &=\, \sqrt{\phi}\, \tilde p_N\delta N + \left(\frac{N\tilde p_N}{2\sqrt{\phi}} +  p_\phi - \frac{p_h}{\phi}\right)\delta\phi\nonumber \\
&\neq\, p_N\delta N + p_\phi\delta\phi,
\end{align}
unless the lapse momentum is transformed simultaneously in a compensating manner. Hence, the standard Weyl transformation does not preserve the symplectic structure on $\Gamma_{\rm J}^{\rm (ext)}$ and therefore fails to define a canonical transformation on the extended phase space, in agreement with the analysis of Gionti {\it et al.}~\cite{GabrieleGionti:2020drq}.

The origin of this result is straightforward. In the standard ADM formulation, the lapse and shift are nondynamical Lagrange multipliers enforcing the Hamiltonian and momentum constraints. They are therefore not part of the physical phase space, and restricting the conformal transformation to the canonical variables $(h_{ab},p^{ab};\phi,p_\phi)$ preserves the symplectic structure. Once the lapse and shift are promoted to canonical coordinates, however, the conformal transformation couples the foliation variables to the dynamical sector. Unless the corresponding momenta are transformed consistently, the symplectic structure is modified. The loss of canonicality thus originates not from the conformal transformation itself, but from enlarging the phase space on which it acts.

It should be emphasized that this does not preclude the existence of canonical transformations relating the Jordan and Einstein formulations on the extended phase space. Gionti {\it et al.}~\cite{GabrieleGionti:2020drq} constructed an alternative canonical transformation in which the foliation variables remain unchanged,
\begin{equation}
\tilde N=N,
\qquad
\tilde N^a=N^a,
\nonumber
\end{equation}
while the remaining variables are transformed appropriately. The obstruction therefore concerns the standard Weyl transformation of the ADM variables rather than the existence of canonical maps between the two formulations.

We conclude that the canonical status of the Jordan--Einstein transformation is not an intrinsic property of the conformal map itself, but depends on the underlying symplectic manifold. On the standard ADM phase space the transformation is canonical, whereas on the extended ADM phase space it is not. The apparently different conclusions in the literature therefore reflect different choices of phase space rather than any inconsistency in the Hamiltonian formulation of Brans--Dicke theory.

\section{Physical Hamiltonians and frame dependence}
\label{sec:relationalDyn}

Having established that the Jordan and Einstein formulations are canonically equivalent at the level of the unreduced constrained phase space, we now investigate whether this equivalence persists after ``relational deparametrization.'' This question is of central importance for canonical quantum gravity, where physical evolution is generated not by the Hamiltonian constraint itself but by the reduced Hamiltonian obtained after choosing an internal time variable.

relational deparametrization is not merely a change of variables. It requires selecting one canonical degree of freedom as a physical clock, imposing an appropriate gauge condition, and solving the Hamiltonian constraint for the momentum conjugate to the chosen clock. The resulting reduced Hamiltonian generates evolution with respect to the relational time parameter and therefore determines the physical dynamics. Consequently, even if two constrained Hamiltonian systems are canonically equivalent before reduction, it does not automatically follow that their corresponding reduced theories are dynamically equivalent.

To investigate this, we adopt the Brans--Dicke scalar field itself as the relational clock, \begin{equation} 
\chi(x)=\phi(x)-t\approx0, 
\label{clock-choice} 
\end{equation} 
which is the natural choice since the scalar configuration variable is invariant under the conformal transformation: $\tilde\phi=\phi$. 
This allows the same physical clock to be employed in both conformal frames and therefore provides the most direct comparison between the corresponding reduced dynamics.

The gauge condition (\ref{clock-choice}) is admissible, provided that it defines a non-degenerate gauge fixing of the Hamiltonian constraint. Preservation of the gauge under time evolution requires
\begin{equation}
\dot\chi=\frac{\partial\chi}{\partial t}
+\{\chi,H_{\rm tot}\}=0,
\end{equation}
which implies
\begin{equation}
\{\phi,\mathcal H_N\}
=
\frac{\partial\mathcal H_N}{\partial p_\phi}
\neq0.
\end{equation}
This condition guarantees that the Hamiltonian constraint can be solved locally for the momentum conjugate to the clock and that the Brans--Dicke scalar evolves monotonically along the classical trajectories. Equivalently, the clock momentum (or, after reduction, the physical Hamiltonian) must remain non-vanishing with a definite sign on the branch under consideration. Throughout this work we restrict attention to such monotonic sectors of phase space, where the Brans--Dicke scalar provides a valid relational time variable.

\subsection{Jordan-frame physical Hamiltonian}

In the Jordan frame, solving the Hamiltonian constraint for the scalar momentum yields the following: 
\begin{equation} 
p_\phi = \frac{p_h}{\phi} + \sigma \frac{\sqrt{\bar\omega}}{\kappa^2} \sqrt{h}\,\Omega \, \equiv\,  -H_{\rm phys}, 
\label{eq:momentaTrans-J} 
\end{equation} 
where $\sigma=\pm1$ specifies the orientation of relational time and 
\begin{align} 
\Omega(h_{ab},p^{ab};\phi) &= \Big[{}^{(3)}R - \frac{4\kappa^4}{\phi^2h} \Big( p^{ab}p_{ab} -\frac12p_h^2 \Big) \nonumber\\ & 
\qquad - \frac{U(\phi)}{\phi} - \frac1\phi \Big( 2D^2\phi + \frac{\omega}{\phi} D_a\phi D^a\phi \Big)\Big]^{\frac{1}{2}}. 
\end{align}
However, since the relational clock variable $\phi$ is homogeneous, \( D_a\phi=0 \) and $\Omega$ in equation above reduces to 
\begin{equation} 
\Omega = \left[{}^{(3)}R - \frac{4\kappa^4}{\phi^2h} \Big( p^{ab}p_{ab} -\frac12p_h^2 \Big) - \frac{U(\phi)}{\phi}\right]^{\frac{1}{2}}. 
\end{equation}

\subsection{Einstein-frame physical Hamiltonian}

Applying the same gauge-fixing condition in the Einstein frame and solving the Hamiltonian constraint for the momentum conjugate to the clock gives \begin{equation} 
\tilde p_\phi = \sigma \frac{\sqrt{\bar\omega}}{\kappa^2} \frac1\phi \sqrt{\tilde h}\,\tilde\Omega \, \equiv\,  -\tilde H_{\rm phys}, 
\label{eq:Hamil-Phys-E} 
\end{equation} 
where 
\begin{align} 
\tilde\Omega(\tilde{h}_{ab},\tilde{p}^{ab};\phi) &=\,  \Big[{}^{(3)}\tilde R - \frac{4\kappa^4}{\tilde h} \Big( \tilde p^{ab}\tilde p_{ab} - \frac12\tilde p_{\tilde h}^{\,2} \Big) \nonumber\\ 
& \qquad - \frac{\bar\omega}{\phi^2} \tilde h^{ab} \tilde D_a\phi \tilde D_b\phi - V(\phi)\Big]^{\frac{1}{2}}
\nonumber\\
&=  \frac{1}{\sqrt{\phi}}\, \Omega(h_{ab},p^{ab};\phi).
\label{PhiRelation} 
\end{align}
For homogeneous configurations here, $\tilde\Omega$ is simplified as 
\begin{align} 
\tilde\Omega &= \Big[{}^{(3)}\tilde R - \frac{4\kappa^4}{\tilde h} \Big( \tilde p^{ab}\tilde p_{ab} - \frac12\tilde p_{\tilde h}^{\,2} \Big) - V(\phi)\Big]^{\frac{1}{2}}.
\end{align}
This is fully consistent with the momentum transformation (\ref{eq:momentaTrans}). Indeed, substituting  Eqs.~(\ref{eq:momentaTrans-J}) and (\ref{eq:Hamil-Phys-E}) into the identity $\tilde{p}_\phi=p_\phi-p_h/\phi$,  immediately recovers Eq.~(\ref{PhiRelation}).

At first sight, one might therefore expect the reduced Hamiltonians to be identical, since the same scalar field has been chosen as the internal clock in both formulations. As we now demonstrate, however, this expectation is incorrect.

\begin{theorem}[Naive relational deparametrization]
\label{theorem1}
Consider Brans--Dicke theory formulated on the standard ADM phase space
\[
\Gamma_{\rm J}
=
\{h_{ab},p^{ab};\phi,p_\phi\},
\]
and let
$\mathcal C:\Gamma_{\rm J}\rightarrow\Gamma_{\rm E}$
denote the canonical conformal transformation constructed in
Section.~\ref{sec:Hamiltonian-Anal-E}.
Suppose that the same scalar field
\(
\tilde{\phi}=\phi
\)
is employed as the internal clock in both conformal frames and that
relational deparametrization is performed by solving the Hamiltonian
constraint for the momentum conjugate to this clock.
Then the reduced Hamiltonians satisfy
\begin{equation}
\tilde H_{\rm phys}
=
H_{\rm phys}
+
\frac{p_h}{\phi},
\label{eq:theorem-shift}
\end{equation}
for generic configurations with
\(p_h\neq0\).
Consequently,
\begin{equation}
\mathcal R\circ\mathcal C
\neq
\mathcal C\circ\mathcal R,
\label{noncommuting-diagram}
\end{equation}
where
\(
\mathcal R:\Gamma\rightarrow\Gamma_{\rm red}
\)
denotes relational deparametrization.
\end{theorem}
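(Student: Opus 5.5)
The plan is to compare the two reduced Hamiltonians directly, using the explicit solutions of the Hamiltonian constraint already obtained in Eqs.~(\ref{eq:momentaTrans-J}) and (\ref{eq:Hamil-Phys-E}), together with the canonical map $\mathcal C$ of Section~\ref{sec:Hamiltonian-Anal-E}. First, I check that the two deparametrizations use the same clock and the same branch. The gauge $\chi=\phi-t\approx0$ is frame independent because $\tilde\phi=\phi$. Admissibility, $\partial\mathcal H_N/\partial p_\phi\neq0$, maps to $\partial\tilde{\mathcal H}_{\tilde N}/\partial\tilde p_\phi\neq0$: at fixed $(h_{ab},p^{ab},\phi)$ we have $\partial/\partial p_\phi=\partial/\partial\tilde p_\phi$, and $\mathcal H_N=\phi^{1/2}\tilde{\mathcal H}_{\tilde N}$ on homogeneous slices. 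So the same orientation $\sigma$ can be used in both frames.

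Next, I express $\tilde H_{\rm phys}$ in Jordan variables. Using $\tilde h=\phi^3h$ and the relation $\tilde\Omega=\phi^{-1/2}\Omega$ from Eq.~(\ref{PhiRelation}), the Einstein-frame solution becomes
\[
-\tilde H_{\rm phys}=\sigma\frac{\sqrt{\bar\omega}}{\kappa^2}\,\frac{1}{\phi}\,\phi^{3/2}\sqrt{h}\,\phi^{-1/2}\,\Omega=\sigma\frac{\sqrt{\bar\omega}}{\kappa^2}\sqrt{h}\,\Omega .
\]
Comparing with Eq.~(\ref{eq:momentaTrans-J}), $-H_{\rm phys}=p_h/\phi+\sigma\frac{\sqrt{\bar\omega}}{\kappa^2}\sqrt h\,\Omega$, and subtracting gives $\tilde H_{\rm phys}=H_{\rm phys}+p_h/\phi$. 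This is the same statement as the identity $\tilde p_\phi=p_\phi-p_h/\phi$, now read on the constraint surface as a relation between the two solved clock momenta. To confirm Eq.~(\ref{PhiRelation}) independently, I would substitute $\tilde p^{ab}=p^{ab}/\phi$, $\tilde h_{ab}=\phi h_{ab}$, ${}^{(3)}\tilde R={}^{(3)}R/\phi$ and $V=U/\phi^2$. Each term under the square root then scales as $1/\phi$; for the kinetic term, $\tilde h^{-1}\tilde p^{ab}\tilde p_{ab}=\phi^{-3}h^{-1}\phi^{-2}\phi^{2}p^{ab}p_{ab}$, since the index lowering contributes $\phi^2$.

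For the non-commutativity statement (\ref{noncommuting-diagram}), I read $\mathcal R$ as the map sending a constrained system to the reduced system $(h_{ab},p^{ab})$ with Hamiltonian $H_{\rm phys}$. The reduced phase spaces are identified through $\mathcal C$ restricted to $\phi=t$, that is $\tilde h_{ab}=t\,h_{ab}$ and $\tilde p^{ab}=p^{ab}/t$. I then show that the two reduced generators differ by $p_h/t$, and that this difference is not a constant or a function of $t$ alone. This holds because $\{h_{ab},p_h/t\}=h_{ab}/t\neq0$, so the term produces a nontrivial dilatation flow on the reduced phase space whenever $p_h\neq0$. The resulting Hamilton equations therefore differ, which gives $\mathcal R\circ\mathcal C\neq\mathcal C\circ\mathcal R$.

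The main obstacle is this last step. One must check that the discrepancy is not absorbed by the time dependence of the reduced chart, since $\mathcal C$ at $\phi=t$ depends explicitly on $t$. A time-dependent canonical transformation shifts the Hamiltonian by $-\partial F/\partial t$ for a generating function $F$. Here $F=\int p^{ab}\,\tilde h_{ab}/t$ type terms yield a shift proportional to $p_h/t$, potentially with the opposite sign. So I would compute the generating function of the restricted map carefully. The theorem concerns the naive comparison, in which the reduced Hamiltonians are compared pointwise under $\mathcal C$ and this correction is not included, so I would state explicitly that $\mathcal R\circ\mathcal C$ and $\mathcal C\circ\mathcal R$ are compared in that sense. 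I would also flag that tracking this term is what the later frame-adapted clock chart is designed to do.
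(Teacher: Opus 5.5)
Your derivation of $\tilde H_{\rm phys}=H_{\rm phys}+p_h/\phi$ is correct and is the paper's argument in substance. The paper substitutes $p_\phi=-H_{\rm phys}$ and $\tilde p_\phi=-\tilde H_{\rm phys}$ into $\tilde p_\phi=p_\phi-p_h/\phi$. You instead verify $\tilde\Omega=\phi^{-1/2}\Omega$ directly and justify using a common branch $\sigma$, which the paper takes for granted.

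The step that fails is your dynamical argument for $\mathcal R\circ\mathcal C\neq\mathcal C\circ\mathcal R$, namely that the dilatation flow of $p_h/t$ makes the Hamilton equations differ. The correction you worried about has the \emph{same} sign as the discrepancy and cancels it exactly. The restricted map $\tilde h_{ab}=t\,h_{ab}$, $\tilde p^{ab}=p^{ab}/t$ is generated by the type-2 functional $F_2[h,\tilde p;t]=\int d^3x\,t\,h_{ab}\tilde p^{ab}$. Indeed $p^{ab}=\delta F_2/\delta h_{ab}=t\,\tilde p^{ab}$ and $\tilde h_{ab}=\delta F_2/\delta\tilde p^{ab}=t\,h_{ab}$. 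The transformed Jordan Hamiltonian is therefore
\[
\int d^3x\,H_{\rm phys}+\partial_t F_2=\int d^3x\,\big(H_{\rm phys}+h_{ab}\tilde p^{ab}\big)=\int d^3x\,\Big(H_{\rm phys}+\frac{p_h}{t}\Big)=\int d^3x\,\tilde H_{\rm phys}.
\]
The extra contribution $\tilde h_{ab}/t=h_{ab}$ to $d\tilde h_{ab}/dt$ is exactly the term $\frac{d}{dt}(t\,h_{ab})-t\,\dot h_{ab}$ produced by the explicit $t$-dependence of the chart, and likewise for $\tilde p^{ab}$. Hence $(h_{ab}(t),p^{ab}(t))$ solves the Jordan reduced equations if and only if $(t\,h_{ab}(t),p^{ab}(t)/t)$ solves the Einstein ones. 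This must be so: $\mathcal C$ fixes $\phi$ and maps the Jordan constraint surface and its gauge orbits onto the Einstein ones, since $\mathcal H_N=\sqrt{\phi}\,\tilde{\mathcal H}_{\tilde N}$, as you noted yourself.

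So the claim that the two relational evolutions differ cannot be proved, because it is false. What survives is exactly your fallback. $\mathcal R\circ\mathcal C\neq\mathcal C\circ\mathcal R$ holds only in the sense that the reduced Hamiltonian \emph{functions} differ by $p_h/\phi\neq0$ when compared pointwise under $\mathcal C|_{\phi=t}$ with the $\partial_tF_2$ term omitted. That is also all the paper's proof establishes, since it stops at ``the two reduced Hamiltonians do not coincide.'' You should drop the dilatation-flow paragraph and state explicitly that the non-commutativity is meant at this function level. It is also worth noting that the discrepancy $p_h/\phi$ is precisely $\partial_tF_2$. This is why switching to the adapted clock momentum $P_\Phi=\tilde p_\phi$ in the later frame-adapted chart makes the discrepancy disappear.
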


\begin{proof}
The canonical conformal transformation $\mathcal C$ acts on the unreduced canonical variables according to
\begin{equation}
(h_{ab}, p^{ab}; \phi, p_\phi) \stackrel{\mathcal{C}}{\longmapsto} (\tilde{h}_{ab}, \tilde{p}^{ab}; \phi, \tilde{p}_\phi), \nonumber
\end{equation}
leaving the clock configuration variable invariant,
$\tilde\phi=\phi$, while transforming its conjugate momentum as
\begin{equation}
\mathcal{C}(p_{\phi}) =
p_\phi - \frac{p_h}{\phi}\, =\, \tilde p_\phi.
\nonumber
\end{equation}

Relational deparametrization
\(
\mathcal R
\)
is defined by fixing the gauge
\(
\phi=t
\)
and solving the Hamiltonian constraint for the momentum conjugate to the
chosen clock. Hence,
\begin{align*}
p_\phi &= - H_{\rm phys} 
\qquad \;\, \text{(in the Jordan frame),}
\\
\tilde{p}_\phi &= - \tilde{H}_{\rm phys} 
\qquad  \text{(in the Einstein frame),}
\end{align*}
where
\(H_{\rm phys}\)
and
\(\tilde H_{\rm phys}\)
are given by
Eqs.~(\ref{eq:momentaTrans-J})
and
(\ref{eq:Hamil-Phys-E}),
respectively. Substituting these relations into
$\tilde p_\phi = p_\phi - p_h/\phi$
immediately gives
\[
-\tilde H_{\rm phys}
=
-
H_{\rm phys}
-
\frac{p_h}{\phi},
\]
which proves Eq.~(\ref{eq:theorem-shift}).
Since \(p_h=h_{ab}p^{ab}\)
is generically non-vanishing for dynamical geometries, the two reduced Hamiltonians do not coincide. Therefore, the result obtained by first performing the conformal transformation and then reducing differs from
that obtained by reducing first and subsequently applying the conformal transformation, which 
establishes the claimed non-commutativity (\ref{noncommuting-diagram}).
\end{proof}

\begin{proposition}[Origin of the apparent frame dependence]

The frame dependence obtained after naive relational deparametrization
does not originate from the transformation of the clock configuration
variable, since
\(
\tilde\phi=\phi
\).
Rather, it is entirely due to the transformation of the momentum conjugate to the clock,
\(
\tilde p_\phi = p_\phi - p_h/\phi,
\)
which mixes the scalar momentum with the trace of the gravitational momentum.
Consequently, the Jordan and Einstein formulations employ different
canonical clock sectors,
\[
(\phi,p_\phi)
\, \neq\, 
(\phi,\tilde p_\phi),
\]
even though they share the same clock variable.
The resulting reduced Hamiltonians therefore generate different
relational evolutions despite the canonical equivalence of the
underlying constrained Hamiltonian systems.
\end{proposition}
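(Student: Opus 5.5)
The plan is to split the reduced-Hamiltonian discrepancy of Theorem~\ref{theorem1} into its sources and show that only one of them is active. Relational deparametrization $\mathcal R$ uses two inputs: the gauge condition $\chi=\phi-t\approx0$ of Eq.~(\ref{clock-choice}), which involves only the clock configuration variable, and the momentum $P$ conjugate to the clock, for which the Hamiltonian constraint is solved, so that $P=-H_{\rm phys}$. The reduced Hamiltonian is therefore fixed by the canonical pair $(T,P)$ together with the complementary canonical chart on the remaining variables. I will compare these inputs one at a time under the canonical map $\mathcal C$.

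First, the configuration sector. Since $\tilde\phi=\phi$, the gauge conditions $\phi-t\approx0$ and $\tilde\phi-t\approx0$ define the same gauge-fixing surface in $\Gamma_{\rm J}\cong\Gamma_{\rm E}$. The admissibility conditions $\partial\mathcal H_N/\partial p_\phi\neq0$ and $\partial\tilde{\mathcal H}_{\tilde N}/\partial\tilde p_\phi\neq0$ hold on the same monotonic branch. This follows from the explicit forms (\ref{Hamiltonian-den-Jordan}) and (\ref{Hamiltonian-den-Einstein}): both depend on the clock momentum through $(p_\phi-p_h/\phi)^2=\tilde p_\phi^{\,2}$. Consequently the constraint surfaces and gauge orbits coincide, and the reduced phase spaces, viewed as sets, are the same. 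No discrepancy can come from the clock variable itself.

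Second, the momentum sector. Here I will use Eq.~(\ref{eq:momentaTrans}) together with the symplectic-potential identity $\Theta_{\rm E}=\Theta_{\rm J}$ from Sec.~\ref{subsec:ADMcanonical}. The pair $(\phi,p_\phi)$ is canonical relative to the complement $(h_{ab},p^{ab})$. The pair $(\phi,\tilde p_\phi)$ is canonical relative to a different complement $(\tilde h_{ab},\tilde p^{ab})$, because the term $p^{ab}\delta\phi\, h_{ab}/\phi=(p_h/\phi)\,\delta\phi$ is shifted from the gravitational part of $\Theta$ into the clock part. Thus ``momentum conjugate to $\phi$'' is not chart-independent: it depends on which variables are held fixed, namely $h_{ab}$ or $\tilde h_{ab}=\phi h_{ab}$. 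Solving the same constraint for two different functions on phase space therefore gives two functions differing by exactly $p_h/\phi$, which reproduces Eq.~(\ref{eq:theorem-shift}). This attributes the whole shift to the momentum mixing.

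Third, the physical consequence. The reduced Hamiltonians generate $\phi$-evolution of the respective complementary variables, $h_{ab}$ in one frame and $\tilde h_{ab}$ in the other. Holding $h_{ab}$ fixed while $\phi$ advances is a different flow from holding $\tilde h_{ab}$ fixed. The extra generator $p_h/\phi$ generates the $\phi$-dependent rescaling $h_{ab}\to\phi h_{ab}$, and a check confirms that $\{h_{ab},\int p_h/\phi\}=h_{ab}/\phi$, which is exactly $\partial_\phi\ln\phi$ times $h_{ab}$. So the two reduced systems are different relational evolutions even though they come from equivalent constrained systems.

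The main obstacle is stating precisely in what sense the two relational evolutions are ``different'' rather than merely presented in different coordinates. I expect to resolve this by noting that they are related by a clock-time-dependent canonical transformation, $h_{ab}\mapsto t\,h_{ab}$, whose generating function contributes $p_h/\phi$. As evolutions of a fixed set of variables they differ, yet this is compatible with the canonical equivalence established before reduction.
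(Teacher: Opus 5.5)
Your proposal is correct, and it takes a different and more detailed route than the paper.

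\textbf{The paper's route.} The paper gives no separate proof of this proposition. It is read off from the proof of Theorem 1, which only substitutes $p_\phi=-H_{\rm phys}$ and $\tilde p_\phi=-\tilde H_{\rm phys}$ into $\tilde p_\phi=p_\phi-p_h/\phi$. The remark $\tilde\phi=\phi$ is then taken as enough to rule out the configuration sector.

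\textbf{Your route.} You split $\mathcal R$ into its inputs and check each one.
\begin{itemize}
\item The gauge surface, the admissibility condition and the constraint surface coincide in the two frames. For admissibility, both $\{\phi,\mathcal H_N\}$ and $\{\phi,\tilde{\mathcal H}_{\tilde N}\}$ are proportional to $\tilde p_\phi$.
\item The shift is explained as the chart dependence of ``the momentum conjugate to $\phi$'': it depends on whether $h_{ab}$ or $\phi h_{ab}$ is held fixed, which you read off from $\Theta_{\rm E}=\Theta_{\rm J}$.
\item $p_h/\phi$ is identified as $\partial_t F$ for the clock-time-dependent canonical map with generating functional $F=\int d^3x\, t\, h_{ab}\tilde p^{ab}$. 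This gives $p^{ab}=t\tilde p^{ab}$ and $\tilde h_{ab}=t h_{ab}$.
\end{itemize}

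\textbf{What each buys.} The paper's computation is quicker and suffices for $\tilde H_{\rm phys}=H_{\rm phys}+p_h/\phi$. Yours gives the last sentence of the proposition a precise meaning. Because $\tilde p_\phi+\tilde H_{\rm phys}$ and $p_\phi+H_{\rm phys}$ are the same function on phase space, both deparametrized constraints generate the same gauge flow. Hence the two reductions give identical gauge-fixed trajectories, and the two reduced systems are related by your time-dependent canonical map. ``Different relational evolutions'' can therefore only mean different flows of a fixed set of variables, as you conclude, not physically distinct dynamics. That is a sharper and more defensible reading than the paper's framing via $\mathcal R\circ\mathcal C\neq\mathcal C\circ\mathcal R$.

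\textbf{One small repair.} In your first step, the coincidence of constraint surfaces and gauge orbits does not follow from the admissibility condition. It follows from $\tilde N\tilde{\mathcal H}_{\tilde N}=N\mathcal H_N$, i.e.\ $\tilde{\mathcal H}_{\tilde N}=\phi^{-1/2}\mathcal H_N$. You can verify this from the explicit constraints using:
\begin{itemize}
\item $\sqrt{\tilde h}=\phi^{3/2}\sqrt h$;
\item $\tilde p^{ab}\tilde p_{ab}=p^{ab}p_{ab}$;
\item $\tilde p_{\tilde h}=p_h$;
\item the transformation law of ${}^{(3)}R$;
\item $\bar\omega=\omega+3/2$, so the gradient terms match.
\end{itemize}
The diffeomorphism constraints coincide because the map is spatially covariant.
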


Theorem~\ref{theorem1} identifies the precise origin of the apparent frame dependence in the reduced theory.
The conformal transformation preserves the canonical structure of the unreduced constrained Hamiltonian system, whereas relational deparametrization depends explicitly on the canonical realization of the
chosen clock. Consequently, the reduced dynamics are determined by the complete canonical clock sector
\( (T,P_T) \), rather than by the clock configuration variable \(T\) alone.

The apparent inequivalence therefore does not reflect a failure of the Jordan--Einstein correspondence. Instead, it arises because the naive reductions are performed using two different canonical embeddings of the same clock variable. 
This observation naturally motivates the construction of a frame-adapted canonical chart, developed in the next section, in which the entire clock sector is aligned before reduction and the equivalence of the reduced Hamiltonians is restored.

\section{Canonical clock sectors and relational frame equivalence}
\label{sec:frame-adapted}

The analysis of previous section  revealed that the reduced Hamiltonian obtained after relational deparametrization depends on the conformal frame when the scalar field is employed as an internal clock in the naive manner [cf. Theorem~\ref{theorem1}]. In particular, the same clock configuration variable,
$T=\phi$,
leads to different physical Hamiltonian systems when the conjugate momentum associated with this clock is chosen differently in the Jordan and Einstein frames. This observation raises a fundamental question: {\em is the apparent frame dependence a genuine physical effect, or does it originate from the identification of inequivalent canonical clock sectors?}

The answer follows from recognizing that a relational clock is not specified solely by its configuration variable $T$. In a Hamiltonian theory, a clock is a canonical object defined by the pair $(T,P_T)$, where the conjugate momentum determines the embedding of the clock degree of freedom into the full phase space. Two formulations may therefore employ the same scalar field as a clock while defining different canonical clock sectors. In that case, the resulting reduced Hamiltonian systems describe evolution with respect to different relational times.

In this section, we show that the frame dependence encountered previously is precisely of this type. By constructing a canonical chart in which the clock sector is adapted to the conformal transformation, the Jordan and Einstein-frame reductions become canonically equivalent. The essential point is that the conformal transformation must be implemented not only on the configuration variables, but on the complete canonical clock pair before performing the relational deparametrization.

The construction is carried out entirely within the Jordan-frame phase space. No modification of the underlying conformal transformation is required. Instead, we identify the canonical variables in the Jordan frame that reproduce the Einstein-frame clock sector.

\subsection{Construction of the frame-adapted canonical chart}

Consider the Jordan-frame ADM phase space $\Gamma_{\rm J}$ given by Eq.~(\ref{eq:GammaJ}),
with symplectic potential $\Theta_{\rm J}$ given by (\ref{eq:symplec1form-J}).
We now introduce a new canonical chart,
\begin{equation}
\Gamma_\ast =   \{Q_{ab},P^{ab};\Phi,P_\Phi\},
\label{eq:Gamma.star}
\end{equation}
whose clock sector is adapted to the conformal transformation. The guiding requirement is that the new momentum $P_\Phi$ conjugated to the scalar field $\Phi$ coincides with the momentum $\tilde{p}_\phi$ that naturally appears after the transforming to the Einstein frame.

The clock variables are therefore defined as
\begin{equation}
    \Phi:=\phi ,
    \qquad
    P_\Phi
    :=
    p_\phi-\frac{p_h}{\phi}.
    \label{eq:adapted-clock-def}
\end{equation}
The shift in the momentum variable is the crucial ingredient. The scalar momentum does not transform independently under the conformal map because the rescaling of the spatial metric mixes the scalar and gravitational sectors. Consequently, the {\em naive} pair $(\phi,p_\phi)$ does not represent the same canonical clock sector in the two conformal frames.

To construct the remaining variables, we demand that the new chart preserves the Jordan-frame symplectic structure. We therefore require
\begin{equation}
    \Theta_{\rm J}
    =
    \Theta_\ast
    +
    \delta F ,
\end{equation}
where
\begin{equation}
    \Theta_\ast
    =
    \int d^3x
    \left(
    P^{ab}\delta Q_{ab}
    +
    P_\Phi\delta\Phi
    \right).
\end{equation}
Choosing the generating functional contribution to vanish, $F=0$, we identify the configuration variable associated with the spatial geometry as the conformally rescaled metric,
\begin{equation}
    Q_{ab}:=\phi h_{ab}.
    \label{eq:adapted-metric-def}
\end{equation}
The variation of this quantity is $\delta Q_{ab}
= \phi\,\delta h_{ab} + h_{ab}\delta\phi$. Substituting this relation into the adapted symplectic potential gives
\begin{align}
\Theta_\ast
&=
\int d^3x
\left[
P^{ab}
\left(
\phi\delta h_{ab}
+
h_{ab}\delta\phi
\right)
+
P_\Phi\delta\phi
\right]
\nonumber\\
&=
\int d^3x
\left[
\phi P^{ab}\delta h_{ab}
+
\left(
P^{ab}h_{ab}
+
P_\Phi
\right)
\delta\phi
\right].
\end{align}
Comparing with the Jordan-frame symplectic potential,
\begin{equation}
\Theta_{\rm J}
=
\int d^3x
\left(
p^{ab}\delta h_{ab}
+
p_\phi\delta\phi
\right),\nonumber
\end{equation}
equating the coefficients of the independent variations
\(\delta h_{ab}\) and \(\delta\phi\), we obtain a uniquely unique result.
\begin{align}
P^{ab}
=
\frac{p^{ab}}{\phi}
\quad \text{and} \quad 
P_\Phi = p_\phi - \frac{p_h}{\phi}.
\label{eq:P-adapted}
\end{align}

Therefore, the complete adapted canonical transformation is
\begin{equation}
Q_{ab} =\phi h_{ab},
\quad 
P^{ab}=\frac{p^{ab}}{\phi},
\quad 
\Phi=\phi,
\quad 
P_\Phi=
p_\phi-\frac{p_h}{\phi}.
\label{eq:full-transformation-set}
\end{equation}
This transformation defines a canonical chart {\em inside} the original Jordan-frame phase space. Importantly, the new variables are not obtained by simply rewriting the Einstein-frame theory; rather, they provide a Jordan-frame realization of the same canonical structure. In particular, the scalar momentum appearing in the clock sector is no longer the naive Jordan-frame momentum $p_\phi$, but the momentum $P_\Phi$ compatible with the conformal embedding.

The frame-adapted phase space can therefore be interpreted as
\begin{equation}
    \Gamma_\ast
    \, \simeq\, 
    \Gamma_{\rm E},
\end{equation}
not because the two frames are identical before reduction, but because they represent the same canonical clock sector when the conformal transformation is implemented consistently.

\subsection{Canonical structure of the adapted clock sector}

The defining property of the transformation (\ref{eq:full-transformation-set}) is that it preserves the symplectic structure of the unreduced theory. Therefore, the variables $\Gamma_\ast = \{Q_{ab},P^{ab};\Phi,P_\Phi\}$ form a genuine canonical chart. The most important consequence concerns the clock sector. In the original Jordan variables, the scalar momentum $p_\phi$ is not purely associated with the scalar degree of freedom, since the conformal transformation mixes the scalar and gravitational sectors. The adapted momentum $P_\Phi$ precisely removes this mixing.

Indeed, the canonical brackets in the adapted chart satisfy
\begin{equation}
    \{\Phi(x),P_\Phi(y)\}
    =
    \delta(x,y),
\end{equation}
while the clock momentum is independent of the conformally rescaled metric variables,
\begin{equation}
    \{Q_{ab}(x),P_\Phi(y)\}=0 .
\end{equation}
The latter relation is the essential feature of the construction. Although the adapted momentum does not commute with the original Jordan-frame metric,
\begin{equation}
\{h_{ab}(x), P_\Phi(y)\} = -\frac{h_{ab}(x)}{\phi(x)}\delta(x,y)\, \neq\, 0,
\end{equation}
this non-vanishing bracket only reflects the fact that the original variables mix the scalar and gravitational sectors. Once the geometry is expressed in the conformally adapted variables $Q_{ab}$, the scalar clock sector becomes canonically separated.

Thus, the transformation does not introduce a new physical clock. Rather, it identifies the canonical realization of the same scalar clock that is naturally associated with the Einstein-frame formulation.

\subsection{The canonical clock sector and frame-independent reduction}

The previous results allow us to formulate the central statement of this work. The relevant object for relational dynamics is not the clock variable alone, but the canonical clock sector.

\begin{theorem}[Canonical clock sector]
\label{theorem:canonical-clock}
Consider two conformally related Hamiltonian formulations of Brans--Dicke theory. Suppose that relational evolution is defined using the same scalar configuration variable $T=\phi$,
but with different choices of conjugate momenta in the two canonical frames. Then the corresponding reduced Hamiltonian systems need not be canonically equivalent.
However, if the clock sector $(\phi, p_\phi)$ is first transformed into the adapted canonical pair
\begin{equation}
    (\Phi,P_\Phi)
    =
    \Big(
    \phi,
    p_\phi-\frac{p_h}{\phi}
    \Big),
\end{equation}
the reduced theories obtained in the Jordan and Einstein frames are canonically equivalent.
\end{theorem}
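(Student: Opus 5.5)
The proof has two parts, mirroring the statement. For the first part, the non-equivalence claim, no new computation is needed: it is exactly the content of Theorem~\ref{theorem1}. There, reducing with the naive Jordan pair $(\phi,p_\phi)$ gives $p_\phi=-H_{\rm phys}$. Reducing with the Einstein pair $(\phi,\tilde p_\phi)$ gives $\tilde p_\phi=-\tilde H_{\rm phys}$. The identity $\tilde p_\phi=p_\phi-p_h/\phi$ then forces $\tilde H_{\rm phys}-H_{\rm phys}=p_h/\phi$. This difference is generically nonzero on the reduced phase space, so the two reduced Hamiltonians, viewed as functions of the common reduced variables, are not related by the identity map. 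The conclusion that they ``need not be canonically equivalent'' is the existential weakening of that statement, so one counterexample configuration with $p_h\neq0$ suffices.

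For the second part, I would work entirely in the adapted chart $\Gamma_\ast$ of Eq.~(\ref{eq:full-transformation-set}). The first step is to observe that this chart coincides with $\mathcal C$ componentwise: $(Q_{ab},P^{ab},\Phi,P_\Phi)=(\tilde h_{ab},\tilde p^{ab},\phi,\tilde p_\phi)$. The symplectic-potential computation of Section~\ref{sec:frame-adapted} shows that it is canonical, $\Theta_\ast=\Theta_{\rm J}$.

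The second step is to rewrite the Jordan Hamiltonian constraint (\ref{Hamiltonian-den-Jordan}) in these variables. Using $p^{ab}p_{ab}-\tfrac12p_h^2=\phi^2(P^{ab}P_{ab}-\tfrac12P_Q^2)$, where indices of $P^{ab}$ are lowered with $Q_{ab}$, together with $\sqrt h=\phi^{-3/2}\sqrt Q$, $p_\phi-p_h/\phi=P_\Phi$, and the curvature relation (\ref{eq:curvature-CT0}), I would show that
\begin{equation}
\mathcal H_N=\sqrt{\phi}\,\tilde{\mathcal H}_{\tilde N}\big|_{(\tilde h,\tilde p,\phi,\tilde p_\phi)=(Q,P,\Phi,P_\Phi)}.
\nonumber
\end{equation}
Consequently the two constraint surfaces coincide, and the lapses are related by $\tilde N=\sqrt\phi N$. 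Similarly, $\mathcal H_a$ is equal to $\tilde{\mathcal H}_a$ expressed in the adapted variables.

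The third step is to impose the common gauge $\Phi=t$ and solve the constraint for $P_\Phi$. Since the constraints agree up to a positive factor, this yields the same root $P_\Phi=-\tilde H_{\rm phys}(Q,P;t)$ in both frames, with the same branch $\sigma$. The reduced phase spaces $\{Q_{ab},P^{ab}\}$ therefore carry identical symplectic forms, because the reduced symplectic form is the pullback of $\Theta_\ast$ restricted to $\delta\Phi=0$. Since they also carry identical Hamiltonians, the reduced systems are related by the identity map, which is canonical.

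The main obstacle is the second step, specifically checking that the spatial-gradient terms in the two constraints match. These are the $2D^2\phi$ term and the $\omega/\phi$ and $\bar\omega/\phi^2$ terms. Matching them requires the full form of (\ref{eq:curvature-CT0}), including the $\tfrac32$ coefficient, so that $\omega\to\bar\omega$ works out. In the homogeneous clock gauge $D_a\phi=0$ these terms vanish. I would therefore first prove the statement in that sector, which is all the gauge $\phi=t$ requires, and treat the inhomogeneous identity as a consistency check via the covariant equivalence of the actions. A further point to verify is that the gauge-preservation condition $\partial\mathcal H_N/\partial P_\Phi\neq0$ is frame independent. This follows from the proportionality factor $\sqrt\phi>0$.
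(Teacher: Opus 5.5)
You follow the paper's route in your second part: identify the adapted pair with $(\phi,\tilde p_\phi)$, solve the constraint for $P_\Phi$ in the gauge $\Phi=t$, and conclude $H^{\ast}_{\rm phys}=\tilde H_{\rm phys}$. Your version is more complete than the paper's proof, which only asserts that isolating $P_\Phi=\tilde p_\phi$ forces the reduced Hamiltonians to agree. Your identity $\mathcal H_N=\sqrt{\phi}\,(\tilde{\mathcal H}_{\tilde N}\circ\mathcal C)$, together with $\mathcal H_a=\tilde{\mathcal H}_a\circ\mathcal C$, supplies the missing justification. The latter does hold, because the $p_h\phi^{-1}D_a\phi$ pieces cancel. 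Together they guarantee that the two constraint surfaces coincide, and hence so do the roots for $P_\Phi$ on a given branch $\sigma$.

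Two corrections. First, with indices lowered by $Q_{ab}$ one has $P_{ab}=\phi\,p_{ab}$ and $P_Q=p_h$. So $p^{ab}p_{ab}-\tfrac12p_h^2=P^{ab}P_{ab}-\tfrac12P_Q^2$, with no factor $\phi^2$; with your factor the geometric kinetic terms would fail to match by $\phi^2$. Second, you need not retreat to the homogeneous sector. By (\ref{eq:curvature-CT0}), the non-kinetic part of $\sqrt{\phi}\,\tilde{\mathcal H}_{\tilde N}$ is $-\frac{\sqrt h}{2\kappa^2}\bigl[\phi\,{}^{(3)}R-2D^2\phi+(\tfrac32-\bar\omega)\phi^{-1}D_a\phi D^a\phi-U\bigr]$. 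Since $\tfrac32-\bar\omega=-\omega$, the identity holds exactly.

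The genuine gap is in your first part. From $\tilde H_{\rm phys}\circ\mathcal C=H_{\rm phys}+p_h/\phi$ you infer that the reduced systems are not related by the identity map. You then treat ``need not be canonically equivalent'' as the existential weakening of that. It is not: failure of one identification says nothing about the existence of another canonical map, and here one exists.

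On the slice $\phi=t$, the map $(h_{ab},p^{ab})\mapsto(t\,h_{ab},p^{ab}/t)$ is a time-dependent canonical transformation with generating functional $F_2=\int d^3x\,t\,h_{ab}\tilde p^{ab}$. Its time derivative $\partial_tF_2=\int d^3x\,p_h/t$ is precisely the shift in Theorem~\ref{theorem1}. Equivalently, $\delta\tilde p^{ab}\wedge\delta\tilde h_{ab}-\delta\tilde H_{\rm phys}\wedge\delta t=\delta p^{ab}\wedge\delta h_{ab}-\delta H_{\rm phys}\wedge\delta t$. So the naive Jordan and Einstein reductions generate the same trajectories. This is generic: any two canonical charts sharing the clock variable give Poincar\'e--Cartan forms $p\,\delta q-H\,\delta t$ that differ by an exact term.

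Your own second part exhibits the same fact. The naive and adapted Jordan reductions use the same slice $\Phi=\phi=t$ and differ only by this $t$-dependent change of reduced coordinates. So the first clause can only be proved in a narrow sense. In that sense, the two reduced Hamiltonians differ as functions when the reduced variables are identified through $\mathcal C|_{\phi=t}$ and the explicit $t$-dependence of that identification is ignored. You should state and prove it in that form. The paper's proof argues this clause the same way (different momenta are isolated, hence different Hamiltonians), so it does not close the gap either.
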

\begin{proof}
The apparent frame dependence arises because relational deparametrization is performed by solving the Hamiltonian constraint for the momentum conjugate to the chosen clock variable. In the naive Jordan description this procedure gives
$ p_\phi = -H_{\rm phys}$,
whereas in the Einstein frame the corresponding reduction gives
$\tilde p_\phi  = -\tilde H_{\rm phys}$.
Although both descriptions employ the same configuration variable $\phi$ as the relational time, the momenta being isolated by the constraint are different canonical variables. Therefore, the two reduced Hamiltonians describe evolution with respect to different canonical embeddings of the clock degree of freedom.

In the adapted chart, however, the clock momentum is replaced by $P_\Phi  =   p_\phi - p_h/\phi$
which is precisely the canonical momentum associated with the conformally transformed scalar sector. Under the canonical embedding between the Jordan and Einstein descriptions, one has
$ P_\Phi   \equiv  \tilde p_\phi$.
Consequently, solving the Hamiltonian constraint for the adapted clock momentum gives 
\begin{equation}
    P_\Phi
    =
    -H_{\rm phys}^{\ast},
\end{equation}
while the Einstein-frame reduction gives
\begin{equation}
    \tilde p_\phi  = -\tilde H_{\rm phys}.
    \nonumber
\end{equation}
Since the clock momenta are identical canonical variables, $P_\Phi=\tilde p_\phi$,
the two reduced Hamiltonians necessarily coincide,
\begin{equation}
    H_{\rm phys}^{\ast}
    =
    \tilde H_{\rm phys}.
\end{equation}
Therefore, the reduced phase spaces generated after deparametrization are related by a canonical transformation, and the Jordan and Einstein relational dynamics are physically equivalent.
\end{proof}
\medskip 

Theorem~\ref{theorem:canonical-clock} completes the analysis by demonstrating that the apparent frame dependence identified in Theorem~\ref{theorem1} is not fundamental. Once the relational clock is represented by the frame-adapted canonical variables, the Jordan and Einstein formulations generate identical reduced Hamiltonians. The discrepancy therefore originates entirely from comparing different canonical realizations of the same relational reference system rather than from any physical inequivalence between the two conformal frames.

The restoration of frame equivalence follows because the adapted variables align the complete canonical clock sector before deparametrization. While the conformal transformation leaves the clock configuration variable unchanged, it induces a nontrivial shift in its conjugate momentum. The frame-adapted variables absorb this momentum shift into the definition of the clock sector while preserving the symplectic structure. Consequently, relational reduction is performed with respect to canonically equivalent clock sectors in both formulations, leading to identical generators of physical evolution.

Taken together, Theorems~\ref{theorem1} and~\ref{theorem:canonical-clock} provide a complete characterization of the Jordan--Einstein correspondence in the relational setting. Theorem~\ref{theorem1} shows that naive deparametrization generally produces different reduced Hamiltonians because the canonical clock sectors are not aligned under the conformal transformation. Theorem~\ref{theorem:canonical-clock} demonstrates that this apparent inequivalence disappears once the reduction is carried out using the frame-adapted canonical variables. The Jordan and Einstein formulations therefore remain fully equivalent at the level of relational dynamics, provided that the internal time is transformed as a complete canonical object rather than merely as a configuration variable.

More broadly, these results highlight a general feature of relational Hamiltonian systems. Canonical equivalence of unreduced constrained theories alone is not sufficient to guarantee equivalence after relational deparametrization. The decisive ingredient is the canonical identification of the reference system used to define physical evolution. In Brans--Dicke theory, this role is played by the canonical clock sector $(T,P_T)$ rather than by the clock configuration variable $T$ alone. The frame-adapted variables identify the unique canonical realization of this clock sector that is invariant under the Jordan--Einstein transformation, thereby restoring the equivalence of the reduced Hamiltonian theories.

\section{Discussion, Conclusions and Outlook}
\label{sec:conclusion}

The analysis presented in this work shows that the question of Jordan--Einstein frame equivalence in scalar--tensor gravity cannot be answered independently of the Hamiltonian structure and the relational prescription used to define physical
evolution. Frame equivalence is therefore not a single statement, but a hierarchy of equivalence relations associated with different stages of the canonical
construction. While the Jordan and Einstein formulations are equivalent at the covariant level and canonically related on the standard ADM phase space, their relation becomes more subtle once relational dynamics are introduced through deparametrization.

At the level of the classical action, the two formulations are connected by an invertible conformal transformation and describe the same classical solution space. On the standard ADM phase space, $\Gamma_{\rm J}$ [cf.~Eq.~(\ref{eq:GammaJ})], the conformal transformation preserves the symplectic structure and defines a canonical transformation between the corresponding constrained Hamiltonian systems \cite{Garay:1992ej, Deruelle:2009pu}. However, when the lapse and shift are promoted to canonical variables and the extended Dirac phase space $\Gamma_{\rm J}^{\rm (ext)}$ [cf.~Eq.~(\ref{extended-phase-space})] is considered, the conformal transformation no longer preserves the extended symplectic structure \cite{GabrieleGionti:2020drq}. This clarifies that the different conclusions in the literature arise from different choices of canonical phase space rather than from a physical inconsistency between the two frames.

The central result of this work concerns the additional structure introduced by relational dynamics. Once an internal clock is selected and the Hamiltonian constraint is solved for its conjugate momentum, the resulting reduced Hamiltonian depends on the canonical realization of the clock sector. Consequently, canonical equivalence of the unreduced constrained systems does not automatically imply equivalence after relational reduction.

For Brans--Dicke theory, this phenomenon originates from the (canonical conformal) transformation of the scalar momentum:
\begin{equation}
\mathcal{C}:\; p_\phi\; \mapsto\; \tilde p_\phi
=
p_\phi-\frac{p_h}{\phi},
\end{equation}
so that the transformed  momentum mixes with the trace of the gravitational momentum. This mixing reflects the redistribution of dynamical information between the scalar and geometric sectors induced by the conformal transformation. Therefore, although the scalar configuration variable itself is preserved, $\tilde{\phi}=\phi$, the corresponding canonical clock pair is not the same in the two descriptions, $p_\phi\neq \tilde{p}_\phi$.
Using the same configuration variable with different conjugate momenta amounts
to employing different relational clocks and naturally leads to different
reduced Hamiltonians.

This observation leads to the main conceptual conclusion of the present work:
{\em a relational clock is fundamentally the canonical pair $(T, P_T)$, not merely the clock variable $T$}.
Because the physical Hamiltonian is derived directly from $P_T$, relational dynamics are governed by the canonical clock sector. The apparent frame inequivalence between the descriptions originates from a mismatch in the canonical embedding
of this clock sector before reduction.
Equivalently, the phenomenon can be expressed as the non-commutativity between
canonical transformations $\mathcal{C}$ and relational deparametrization $\mathcal R$;
\begin{equation}
\mathcal R\circ\mathcal C
\neq
\mathcal C\circ\mathcal R.
\nonumber
\end{equation}

The apparent frame dependence is resolved by introducing the adapted canonical chart
\begin{equation}
\Gamma_\ast
=
\{Q_{ab},P^{ab};\Phi,P_\Phi\},
\nonumber
\end{equation}
constructed entirely within the original Jordan-frame phase space. The adapted variables are defined by $Q_{ab}=\phi h_{ab}$ and $P_\Phi=p_\phi - p_h/\phi$, 
so that the canonical clock sector $(\phi,p_\phi)$ is consistently replaced by $(\Phi,P_\Phi)$, while preserving the symplectic structure. By construction, $P_\Phi=\tilde p_\phi$, so that deparameterization with respect to the adapted clock momentum yields
\begin{align}
H_{\rm phys}^{\ast}
=
\tilde H_{\rm phys}.
\nonumber
\end{align}
The Jordan and Einstein-frame reduced Hamiltonians therefore generate identical relational dynamics. Frame equivalence is thus restored without modifying the underlying conformal transformation; it is recovered simply by identifying the invariant canonical clock sector before performing a relational deparametrization.

The status of Jordan--Einstein frame equivalence can therefore be summarized as follows:

\begin{table}[ht]
\centering
{\small\begin{tabular}{@{}lll@{}}

Classical action   \quad   & $\Rightarrow$ &\quad  Equivalent \\[3pt]

ADM phase space  \quad   & $\Rightarrow$ & \quad  Canonically equivalent  \\[3pt]

Extended Dirac phase space  \quad   & $\Rightarrow$ & \quad  Generally non-canonical \\[3pt]

Reduced phase space  with naive clock $(\phi, p_\phi)$ \quad   & $\Rightarrow$  & \quad  Clock-sector dependent \\[3pt]

Reduced phase space with adapted clock $(\Phi, P_\Phi)$ \quad   & $\Rightarrow$  & \quad  Canonically equivalent  \\
\end{tabular}}
\end{table}

\vspace{1mm}

The same observation has important consequences for quantum gravity. In a Dirac
quantization approach, where physical states satisfy
\begin{equation}
\hat{\mathcal H}\Psi=0,
\nonumber
\end{equation}
the canonical equivalence of the unreduced constrained systems is naturally preserved. By contrast, reduced phase-space quantization promotes the physical Hamiltonian into the generator of quantum evolution,
\begin{equation}
i\frac{\partial\Psi}{\partial T}
=
\hat H_{\rm phys}\Psi .
\nonumber
\end{equation}
Since this Hamiltonian depends on the canonical realization of the clock,
different choices of clock sectors may lead to different quantum descriptions.
This connects the present analysis with the multiple-choice problem of time in
quantum gravity~\cite{Kuchar:1991qf,Isham:1992ms,Anderson:2012vk}. Whether such differences represent genuine quantum inequivalence or different relational descriptions remains an open question involving the operational treatment of quantum clocks and their backreaction.

The Brans--Dicke theory provides a concrete realization of a more general principle that relational dynamics should be formulated in terms of invariant canonical clock sectors rather than clock configuration variables. This principle is independent of Brans--Dicke theory and may provide a useful framework for studying canonical transformations, relational observables, and reduced phase-space quantization in generally covariant systems.

\section*{Acknowledgments}

Y.T. gratefully acknowledges financial support from the University of Warsaw under the Excellence Initiative -- Research University (IDUB) program (Grant No. PSP: 501-D111-20-0023110), where part of this work was carried out, and from the National Elites Foundation of Iran (BMN) through the Kazemi-Ashtiani grant. 
This work was also conducted within the framework of the following COST (European Cooperation in Science and Technology) Actions: CA18108 ``Quantum Gravity Phenomenology in the Multi-Messenger Approach,'' CA23130 ``Bridging High and Low Energies in Search of Quantum Gravity,'' and CA23115 ``Relativistic Quantum Information.''


\bibliography{main}

\end{document}